\let\oldvec\vec
\documentclass{llncs}

\usepackage{llncsdoc}
\usepackage{multirow}
\usepackage{graphicx}
\usepackage{subcaption} 
\let\vec\oldvec
\usepackage{amsmath}
\usepackage{amssymb}
\usepackage{amsfonts}
\usepackage{cite}
\usepackage{cancel}
\usepackage{wasysym}
\usepackage{microtype}
\usepackage{tikz}
\usepackage{makecell}
\usepackage{booktabs}
\usepackage{paralist}
\usepackage{hyperref}
\usetikzlibrary{positioning}
\usetikzlibrary{decorations.pathreplacing}
\usepackage{soul}

\tikzset{
	mybrace/.style={decorate,decoration={brace,aspect=#1}}
}

\newcommand{\N}{\mathbb{N}}

\newcommand{\F}{\mathbb{F}}

\usepackage[usestackEOL]{stackengine}
\begin{document}

\title{On the transversals of Latin squares generated by nonlinear bipermutive cellular automata}

\author{Alberto Dennunzio\inst{1} \and Maximilien Gadouleau\inst{2} \and Luca Mariot\inst{3}}

\institute{$^1$ Department of Informatics, Systems and Communication, University of Milano-Bicocca, Italy \\
            \email{alberto.dennunzio@unimib.it} \\
           $^2$ Department of Computer Science, Durham University, United Kingdom \\
            \email{m.r.gadouleau@durham.ac.uk} \\
           $^3$ Semantics, Cybersecurity and Services Group, University of Twente, The Netherlands \\
           \email{l.mariot@utwente.nl} \\
           }

\authorrunning{A. Dennunzio, M. Gadouleau, L. Mariot}
\titlerunning{On the Transversals of Latin squares generated by nonlinear CA}

\maketitle

\begin{abstract}
In this short paper, we begin to investigate the conditions under which a generic Bipermutive Cellular Automaton (BCA) with no-boundary conditions of diameter $d$ generates a Latin square of order $N=2^{d-1}$ admitting an orthogonal mate, without relying on the linearity of the local rule. Since an orthogonal mate exists if and only if the Latin square can be partitioned into $N$ disjoint \emph{transversals}, we start by characterizing the subclass of BCA whose Latin squares have a transversal on the main diagonal. In particular, we prove that the main diagonal forms a transversal if and only if the generating function of the bipermutive local rule induces an invertible CA with periodic boundary conditions on a configuration of size $d-1$. We then perform exhaustive search experiments, showing that $d=6$ is the smallest diameter for which there exist nonlinear bipermutive CA that generate Latin squares with a transversal on the main diagonal.
\end{abstract}

\begin{keywords}
cellular automata $\cdot$ Latin squares $\cdot$ bipermutivity $\cdot$ transversals $\cdot$ orthogonality
\end{keywords}

\section{Introduction}
\label{sec:intro}
Latin squares are combinatorial designs with applications in cryptography~\cite{bruen05}, coding theory~\cite{denes91}, the design of experiments~\cite{hedayat71}, and several other fields. Despite their simple definition, Latin squares to some of the most fascinating questions in combinatorics and discrete mathematics\cite{denes-ls}.

A recent line of works considered how to construct Latin squares via Cellular Automata (CA) (see e.g.~\cite{manzoni26} for a recent survey of relevant results). In particular, the authors of~\cite{mariot20} showed that a no-boundary CA of length $2(d-1)$ equipped with a bipermutive local rule (i.e., a Bipermutive CA, or BCA) of diameter $d$ generates a Latin square of order $N = 2^{d-1}$. Moreover, that work also investigated under which conditions the Latin squares generated by two BCA are \emph{orthogonal}, that is, their superposition yields all the ordered pairs in the Cartesian product $[N]\times [N]$, where $[N] = \{1,\ldots, N\}$. More precisely, by focusing on the subclass of Linear BCA (LBCA), the authors of~\cite{mariot20} proved that the orthogonality of the Latin squares is equivalent to the coprimality of the polynomials associated to the underlying linear local rules. The motivation to study this type of constructions is that families of Mutually Orthogonal Latin Squares (MOLS) can be used to design threshold secret sharing schemes and authentication codes, among other cryptographic primitives~\cite{stinson-cd}. 

The rich body of literature leveraging the construction of orthogonal Latin squares through LBCA~\cite{mariot22,mariot23,Hammer23,gadouleau23} contrasts with what little is known about Latin squares induced by \emph{nonlinear} BCA. So far, only a necessary condition on the local rules of two nonlinear BCA has been exhibited in~\cite{mariot17a} to generate orthogonal Latin squares, while~\cite{mariot17b} proposed a heuristic approach based on evolutionary algorithms to construct pairs of nonlinear BCA that generate orthogonal Latin squares. A complete characterization of nonlinear BCA-based orthogonal Latin squares is still a wide open problem.

In this exploratory paper, we start to tackle the study of Latin squares generated by nonlinear BCA from the perspective of their \emph{transversals}. The reason is that a Latin square admits $N$ disjoint transversals (with $N$ being the order of the square) if and only if it has an \emph{orthogonal mate}, i.e. another Latin square orthogonal to it~\cite{denes-ls}. In particular, in the context of this paper we focus on a simpler problem, namely studying under which conditions a nonlinear BCA-based Latin square has a transversal on its main diagonal. Our main result is a characterization of such squares in terms of the invertibility of a periodic-boundary CA that defines the main diagonal of the square, equipped with the generating function of the BCA as a local rule. Finally, we perform a computational search for such Latin squares generated by BCA of small diameters, observing that our characterization needs at least diameter $d=6$ to yield nonlinear rules.

\section{Preliminaries}
\label{sec:bg}

\subsection{Latin Squares}
\label{subsec:ls}
Here, we only recall the essential definitions related to Latin squares. The reader is referred to~\cite{denes-ls} for a more complete overview of this subject.

\begin{definition}
\label{def:ls}
A Latin square of order $N \in \N$ is an $N \times N$ matrix $L$ with entries from $[N]$ such that for all $i, j, k \in [N]$ with $j\neq k$, one has that $L(i, j) \neq L(i, k)$ and $L(j, i) \neq L(k, i)$.
\end{definition}
Thus, in a Latin square each number from $1$ to $N$ occurs exactly once in each row and in each column of a Latin square. Alternatively, each row and each column of $L$ forms a permutation of $[N]$. The next definition formalizes the notion of orthogonal Latin squares:
\begin{definition}
\label{def:ols}
Two Latin squares $L_1, L_2$ of order $N \in \N$ are orthogonal if for all distinct pairs of coordinates $(i, j), (i', j') \in [N] \times [N]$ it holds that
\begin{equation}
(L_1(i,j), L_2(i,j)) \neq (L_1(i', j'), L_2(i', j')) \enspace .
\end{equation}
\end{definition}
Hence, two Latin squares of order $N$ are orthogonal if their \emph{superposition} yields all ordered pairs of the Cartesian product $[N] \times [N]$ exactly once.

A \emph{transversal} in a Latin square $L$ of order $N$ is a set $T$ of $N$ coordinates pair $T = \{(i_1,j_1), \ldots, (i_N,j_N)\}$ where for each distinct pair $(i_k, j_k), (i_l, j_l) \in T$ it holds that $i_k \neq i_l$ and $j_k \neq j_l$, and $\{L(i_1, j_1), \ldots, L(i_N, j_n)\}$ is a permutation of $[N]$. Stated otherwise, each number from $1$ to $N$ occurs only once in the entries of $L$ defined by the transversal $T$. It is well-known that a Latin square $L$ admits an \emph{orthogonal mate} (i.e. another Latin square $M$ that is orthogonal to $M$) if and only if it can be decomposed into $N$ disjoint transversals~\cite{denes-ls}.

\subsection{Cellular Automata}
\label{subsec:ca}
A Cellular Automaton (CA) is a computational model characterized by a regular lattice of \emph{cells}, each of which synchronously updates its state by applying a \emph{local rule} on the configuration defined by itself and its neighboring cells. In this paper we consider \emph{No-Boundary CA} (NBCA) and \emph{Periodic-Boundary CA} (PBCA), which we formally introduce below. Further, we consider only \emph{binary} CA, meaning that each cell can only be in a state $s \in \F_2 = \{0,1\}$.
\begin{definition}
	\label{def:ca}
	Let $d, n \in \N$ be positive integers such that $d\le n$. Further, let $f: \F_2^d \to \F_2$ be a $d$-variable Boolean function. Then, we define the following two models
	of \emph{one-dimensional binary} CA with $n$ input cells, equipped with the \emph{local rule} $f$ of \emph{diameter} $d$:
	\begin{itemize}
		\item \emph{No Boundary CA} (NBCA): $F: \F_2^{n} \rightarrow \F_2^{n-d+1}$ is
		defined for all $x \in \F_2^n$ as:
		\begin{equation}
		\label{eq:nbca}
		F(x_1, \ldots, x_n) = (f(x_1, \ldots, x_d), f(x_2, \ldots, x_{d+1}),
		\ldots, f(x_{n-d+1}, \ldots, x_n)) \enspace .
		\end{equation}
		\item \emph{Periodic Boundary CA} (PBCA): $F: \F_2^{n} \rightarrow \F_2^{n}$ is defined for all $x \in \F_2^n$ as:
		\begin{equation}
		\label{eq:pbca}
		F(x_1, \ldots, x_n) = (f(x_1, \ldots, x_d), \ldots,
		f(x_{n-d},\ldots,x_1), \ldots, f(x_n,\ldots, x_{d-1}) \enspace .
		\end{equation}
	\end{itemize}  
\end{definition}
Thus, the difference between a NBCA and a PBCA lies in how we deal with the cells at the boundaries: in the former, the local rule is only applied to the cells that have enough neighbors on the right, while in a PBCA it is applied to all cells by considering the cell indices as integers modulo $n$.
From a representation standpoint, the truth table of $f$ can be uniquely identified by the $2^d$-bit string $\Omega_f \in \F_2^{2^d}$ of its outputs, once we fix a total order on the input vectors of $\F_2^d$ (e.g., the lexicographic order). The \emph{Wolfram code} of the local rule $f$, is the decimal encoding of $\Omega_f$~\cite{wolfram83}.

A local rule $f: \F_2^d \to \F_2$ is \emph{bipermutive} if there exists $g: \F_2^{d-2} \to \F_2$ such that $f$ can be written for all $(x_1,x_2, \ldots, x_{d-1}, x_d) \in \F_2^d$ as~\cite{leporati14}:
$$
f(x_1,x_2, \ldots, x_{d-1}, x_d) = x_1 \oplus g(x_2, \ldots, x_{d-2}) \oplus x_d \enspace .
$$
Thus, a bipermutive rule depends linearly on the leftmost and rightmost variables.

Finally, the Lemma below briefly recalls how NBCA equipped with bipermutive local rules generate Latin squares, a fact that has been re-discovered independently in different works~\cite{eloranta93,mariot16}. In what follows, we assume that $\F_2^{d-1}$ is lexicographically ordered, and that $\phi: \F_2^{d-1} \rightarrow [N]$ is a monotone bijective mapping from $\F_2^{d-1}$ to $[N]$, with $\psi = \phi^{-1}$ denoting the inverse mapping.
\begin{lemma}
\label{lm:lat-sq-bip-ca}
Let $F: \F_2^{2(d-1)} \rightarrow \F_2^{d-1}$ be a NBCA defined by a $d$-diameter local rule $f: \F_2^d \rightarrow \F_2$, and let $N = 2^{d-1}$. Define the square associated to $F$ as the $N \times N$ matrix $S_F$ with entries from $[N]$ such that
    \begin{equation}
        \label{eq:sq-ca}
        S_{F}(i,j) = \phi(F(\psi(i)||\psi(j))) \enspace ,
    \end{equation}
    for all $1 \le i,j \le N$, where $||$ denotes concatenation. Then, $S_F$ is a Latin square of order $N$ if and only if $f$ is bipermutive.
\end{lemma}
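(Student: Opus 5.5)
The plan is to prove both directions directly from the structure of the NBCA. The key observation is that each output cell of $F$ reads a sliding window which straddles the two halves of the input. Write $x=\psi(i)=(x_1,\ldots,x_{d-1})$ and $y=\psi(j)=(y_1,\ldots,y_{d-1})$. Then the $k$-th output cell of $F(x\|y)$, for $1\le k\le d-1$, is
$$F_k(x\|y)=f(x_k,\ldots,x_{d-1},y_1,\ldots,y_k).$$
Since $\phi,\psi$ are bijections, $S_F$ is Latin if and only if two conditions hold:
\begin{itemize}
\item for every fixed $x$, the map $y\mapsto F(x\|y)$ is a bijection of $\F_2^{d-1}$ (rows);
\item for every fixed $y$, the map $x\mapsto F(x\|y)$ is a bijection of $\F_2^{d-1}$ (columns).
\end{itemize}

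First I would record that the displayed form $f=x_1\oplus g(x_2,\ldots,x_{d-1})\oplus x_d$ is equivalent to $f$ being permutive in both $x_1$ and $x_d$. Left permutivity gives $f=x_1\oplus h(x_2,\ldots,x_d)$. Right permutivity then forces $h$ to flip whenever $x_d$ flips, so $h=x_d\oplus g(x_2,\ldots,x_{d-1})$.

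\emph{Sufficiency.} Assume $f$ is bipermutive.
\begin{itemize}
\item Rows: for fixed $x$, we have $F_k=y_k\oplus c_k(x,y_1,\ldots,y_{k-1})$. This system is triangular, so any target $z\in\F_2^{d-1}$ determines $y_1,y_2,\ldots,y_{d-1}$ uniquely by forward substitution. Hence the row map is bijective.
\item Columns: for fixed $y$, we have $F_k=x_k\oplus c'_k(x_{k+1},\ldots,x_{d-1},y)$. Solving backward from $k=d-1$ down to $k=1$ gives a unique preimage. Hence the column map is bijective.
\end{itemize}

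\emph{Necessity.} Assume $S_F$ is Latin, and suppose first that $f$ is not right permutive. Then there is $w=(w_1,\ldots,w_{d-1})$ with $f(w,0)=f(w,1)$. Choose $x$ with $x_{d-1}=w_1$, and let $y,y'$ agree with $y_1,\ldots,y_{d-2}=w_2,\ldots,w_{d-1}$ and differ only in the last bit. Cells $k<d-1$ do not read $y_{d-1}$, so they coincide on $y$ and $y'$. The last cell evaluates $f(w,y_{d-1})$, which also coincides by the choice of $w$. Hence $F(x\|y)=F(x\|y')$, contradicting injectivity of row $x$.

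Symmetrically, suppose $f$ is not left permutive, so $f(0,v)=f(1,v)$ for some $v\in\F_2^{d-1}$. Take $x,x'$ differing only in the first bit, with $x_2,\ldots,x_{d-1}=v_1,\ldots,v_{d-2}$, and fix $y_1=v_{d-1}$. Only cell $1$ reads $x_1$, and its values coincide by the choice of $v$. This gives a repeated entry in a column.

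The only delicate point is this choice of witnesses in the necessity direction. One must pick inputs that differ in a single bit which is read by exactly one cell, namely the last cell for rows and the first cell for columns. The remaining steps are routine triangular solving.
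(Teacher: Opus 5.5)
Your proof is correct and complete. The paper itself gives no proof of this lemma. It only recalls it as a known fact, citing \cite{eloranta93,mariot16}, so there is no in-paper argument to compare against. Your proposal therefore supplies a self-contained proof of both directions, and every step checks out.

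\begin{itemize}
\item The window formula $F_k(x\|y)=f(x_k,\ldots,x_{d-1},y_1,\ldots,y_k)$ is right.
\item Reducing the Latin property to bijectivity of $y\mapsto F(x\|y)$ and $x\mapsto F(x\|y)$ is valid, because $\phi,\psi$ are bijections and domain and codomain both have size $2^{d-1}$.
\item Your preliminary step is genuinely needed. It shows that the normal form $x_1\oplus g(x_2,\ldots,x_{d-1})\oplus x_d$ is equivalent to permutivity in the first and last variables. This matters because the paper defines bipermutivity only through the normal form, while your necessity argument naturally produces left and right permutivity.
\item Sufficiency is the standard triangular forward/backward substitution.
\item Your necessity witnesses are well chosen. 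Only cell $d-1$ reads $y_{d-1}$, and you fill its window with $w$ via $x_{d-1}$ and $y_1,\ldots,y_{d-2}$. Only cell $1$ reads $x_1$, and you fill its window with $v$ via $x_2,\ldots,x_{d-1}$ and $y_1$.
\end{itemize}

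Two small remarks. First, you silently use the intended argument list $g(x_2,\ldots,x_{d-1})$, whereas the paper's definition of bipermutivity has the typo $g(x_2,\ldots,x_{d-2})$. Second, everything implicitly assumes $d\ge 2$, since for $d=1$ the normal form is meaningless. It would be worth stating both explicitly.
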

From an intuitive point of view, the output of the CA is taken as the entry of the square at the row and column coordinates encoded by the left and the right $(d-1)$-bit blocks of the CA input state.

\section{Transversal on the Main Diagonal}
\label{sec:main}
As recalled in the previous section, the existence of $N$ disjoint transversals in a Latin square $L$ is equivalent to the existence of an orthogonal mate of $L$. However, characterizing the conditions under which a BCA generates a Latin square with $N$ disjoint transversals seems like a tough combinatorial problem, which is beyond the scope of an exploratory paper like the present one. Thus, we start with an easier version of the CA transversal problem: namely, \emph{when does a BCA generate a Latin square that features a transversal on the main diagonal}? The formal statement of the problem is given below:
\begin{problem}
\label{pb:statement}
Let $F: \F_2^{2(d-1)} \to \F_2^{d-1}$ be a NBCA equipped with a bipermutive local rule $f: \F_2^d \to \F_2$ of diameter $d$, where 
$$
f(x_1,x_2, \ldots, x_{d-1}, x_d) = x_1 \oplus g(x_2, \ldots, x_{d-2}) \oplus x_d \enspace ,
$$
with $g: \F_2^{d-2} \to \F_2$. Further, let $T: \F_2^{d-1} \to \F_2^{d-1}$ be the mapping defined for all $x \in \F_2^{d-1}$ as $T(x) = F(x||x)$, i.e. the function whose image corresponds to the main diagonal of the Latin square $S_F$ generated by $F$. What conditions must $f$ satisfy to ensure that $T$ is a permutation, or equivalently that the main diagonal of $S_F$ is a transversal?
\end{problem}

Notice that we do not impose any linearity constraint on the local rule of the CA. In particular, linear CA already have a rich algebraic theory one can leverage to show that they all admit $N$ disjoint transversals, since they always have an orthogonal mate. This is due to the fact that, as proven in~\cite{mariot20}, the orthogonality of two linear BCA is equivalent to the coprimality of the polynomials associated to their local rules. Starting from a single linear BCA, it is easy to show that its associated polynomial always admits a coprime pair. Therefore, in this paper we focus on the generic case where the BCA rule can also be nonlinear, which is the most interesting one.

Let us now take a closer look at the mapping $T$. This corresponds to the application of the CA global rule $F$ on a very specific input state, namely the concatenation of a block $(x_1,\ldots, x_{d-1}) \in \F_2^{d-1}$ with itself, as depicted in Figure~\ref{fig:symm}.
\begin{figure}[t]
    \centering
    \begin{tikzpicture}
        [->,auto,node distance=1.5cm, empt node/.style={font=\sffamily,inner
            sep=0pt}, rect
        node/.style={rectangle,draw,font=\bfseries,minimum size=0.75cm, inner
            sep=0pt, outer sep=0pt}]
        
        \node [empt node] (c)   {};
        \node [rect node] (c1) [right=0.1cm of c] {$y_1$};
        \node [rect node] (c2) [right=0cm of c1] {$y_2$};
        \node [rect node] (c3) [right=0cm of c2, minimum width=2.25cm] {$\cdots$};
        \node [rect node] (c4) [right=0cm of c3] {$y_{d-1}$};
        
        \node [empt node] (f1) [above=0.2cm of c3.north] {\Large $\Downarrow$ \normalsize $F$};
        
        \node [rect node] (p3) [above=0.85cm of c1.north, minimum width=2.25cm] {$\cdots$};
        \node [rect node] (p2) [left=0cm of p3] {$x_2$};
        \node [rect node] (p1) [left=0cm of p2] {$x_1$};
        \node [rect node] (p4) [right=0cm of p3] {$x_{d-1}$};
        \node [rect node] (p5) [right=0cm of p4] {$x_1$};
        \node [rect node] (p6) [right=0cm of p5] {$x_2$};
        \node [rect node] (p7) [right=0cm of p6, minimum width=2.25cm] {$\cdots$};
        \node [rect node] (p8) [right=0cm of p7] {$x_{d-1}$};

        \draw[-,very thick] (p1.north west) -- (p8.north east);
        \draw[-,very thick] (p1.south west) -- (p8.south east);
        \draw[-,very thick] (p1.south west) -- (p1.north west);
        \draw[-,very thick] (p8.south east) -- (p8.north east);
        \draw[-,very thick] (p4.south east) -- (p4.north east);

        \draw[-,very thick] (c1.south west) -- (c4.south east);
        \draw[-,very thick] (c1.south west) -- (c1.north west);
        \draw[-,very thick] (c1.north west) -- (c4.north east);
        \draw[-,very thick] (c4.south east) -- (c4.north east);

        \draw [-, mybrace=0.5, decorate, decoration={brace,amplitude=10pt,raise=0.1cm}]
                (p1.north west) -- (p4.north east) node [midway,yshift=0.4cm] {$x \in \F_2^{d-1}$};
        \draw [-, mybrace=0.5, decorate, decoration={brace,amplitude=10pt,raise=0.1cm}]
                (p5.north west) -- (p8.north east) node [midway,yshift=0.4cm] {$x \in \F_2^{d-1}$};
        \draw [-, mybrace=0.5, decorate, decoration={brace,mirror, amplitude=10pt,raise=0.1cm}]
                (c1.south west) -- (c4.south east) node [midway,yshift=-1cm] {$y = F(x||x) = T(x)$};

    \end{tikzpicture}
    \caption{Application of the global rule $F$ to the symmetric configuration $x||x$, with $x \in \F_2^{d-1}$. This is equivalent to computing $T(x)$.}
    \label{fig:symm}
\end{figure}
On the square representation, this means that we are considering only the coordinates where the row and column coordinates coincide, or equivalently the main diagonal of the square $S_F$. The next remark gives an algebraic characterization of the output of $T$ in terms of the generating function of the local rule.
\begin{remark}
\label{rem:alg-ex}
Let $T: \F_2^{d-1} \to \F_2^{d-1}$ be defined as in~\ref{pb:statement}. Then, for all input vectors $x = (x_1,\ldots, x_{d-1}) \in \F_2^{d-1}$, the output $y = T(x) = F(x||x)$ is computed as $y = (y_1,y_2, \ldots, y_{d-1}) $, where:
\begin{align*}
    \label{eq:output}
        y_1 &= f(x_1, x_2, \ldots, x_{d-1}, x_1) = \cancel{x_1} \oplus g(x_2,\ldots, x_{d-1}) \oplus \cancel{x_1} = g(x_2,\ldots, x_{d-1}) \\
        y_2 &= f(x_2, x_3, \ldots, x_1, x_2) = \cancel{x_2} \oplus g(x_3,\ldots, x_1) \oplus \cancel{x_2} = g(x_3,\ldots, x_1) \\
        &\vdots \\
        y_{d-1} &= f(x_{d-1}, x_1, \ldots, x_{d-2}, x_{d-1}) = \cancel{x_{d-1}} \oplus g(x_1,\ldots, x_{d-2}) \oplus \cancel{x_{d-1}} = g(x_1,\ldots, x_{d-2}) \\
\end{align*}
In other words, the computation of $T$ \emph{only depends on the generating function $g$ computed on the central cells of the neighborhood}, since the leftmost and rightmost variables are always equal, and thus they cancel out under the XOR operation.
\end{remark}

More interestingly, the next remark shows that $T$ actually corresponds to a CA with periodic boundary conditions: 

\begin{remark}
\label{rem:pbca}
The input vectors $(x_2, \ldots, x_{d-1})$, $(x_3, \ldots, x_1)$, $\ldots$, $(x_1, \ldots, x_{d-2})$ upon which $g$ is evaluated corresponds to the cyclic shifts over the configuration $(x_2, x_3, \ldots, x_{d-1}, x_1)$, which has size $d-1$. Thus, the map $T: \F_2^{d-1} \to \F_2^{d-1}$ is actually defined by a CA with periodic boundary conditions having $d-1$ cells, and equipped with the generating function $g: \F_2^{d-2} \to \F_2$ as a local rule. Figure~\ref{fig:pbca} displays $T$ as a PBCA.
\begin{figure}[t]
\centering
\begin{tikzpicture}
            [->,auto,node distance=1.5cm, empt node/.style={font=\sffamily,inner
                sep=0pt}, rect
            node/.style={rectangle,draw,font=\sffamily\bfseries,minimum size=0.75cm, inner
                sep=0pt, outer sep=0pt}, grey node/.style={rectangle,draw,fill=gray!40,
                font=\sffamily\bfseries,minimum size=0.75cm, inner sep=0pt, outer sep=0pt}]
            
            \node [empt node] (c)   {};
            \node [rect node] (c1) [right=0.1cm of c] {$y_2$};
            \node [rect node] (c0) [left=0cm of c1] {$y_1$};
            \node [rect node] (c2) [right=0cm of c1, minimum width=1.5cm] {$\cdots$};
            \node [rect node] (c3) [right=0cm of c2] {$y_{d-3}$};
            \node [rect node] (c4) [right=0cm of c3] {$y_{d-2}$};
            \node [rect node] (c5) [right=0cm of c4] {$y_{d-1}$};
            
            \node [rect node] (p2) [above=1cm of c1] {$x_3$};
            \node [rect node] (p1) [left=0cm of p2] {$x_2$};
            \node [empt node] (p)  [left=0.1cm of p1] {};
            \node [rect node] (p3) [right=0cm of p2, minimum width = 1.5cm] {$\cdots$};
            \node [rect node] (p4) [right=0cm of p3] {$x_{d-2}$};
            \node [rect node] (p5) [right=0cm of p4] {$x_{d-1}$};
            \node [rect node] (p6) [right=0cm of p5] {$x_1$};
            \node [grey node] (p7) [right=0cm of p6] {$x_2$};
            \node [grey node] (p8) [right=0cm of p7] {$x_3$};
            \node [grey node] (p9) [right=0cm of p8, minimum width = 1.5cm] {$\cdots$};
            \node [grey node] (p10) [right=0cm of p9] {$x_{d-2}$};
            \node [empt node] (p11) [above=0.85cm of p3.west] {};
            \node [empt node] (p12) [right=0.3cm of p11] {};
            \node [empt node] (p13) [above=0.85cm of p9.west] {};
            \node [empt node] (p14) [right=0.3cm of p13] {};
            
            \draw [-, mybrace=0.5, decorate, decoration={brace,amplitude=10pt,raise=0.1cm}]
                (p1.north west) -- (p4.north east);
            \draw [-, mybrace=0.5, decorate, decoration={brace,amplitude=10pt,raise=0.1cm}]
                (p7.north west) -- (p10.north east);
            \draw [->, thick, shorten >=0pt,shorten <=0pt,>=stealth] (p12) edge[bend left] (p14);

            \draw [-, mybrace=0.08, decorate, decoration={brace, mirror, amplitude=5pt,raise=0.1cm}] (p1.south west) -- (p5.south east);
            \node [empt node] (e1) [below=0.3cm of p1.south] {};
            \draw [->, thick, shorten >=0pt,shorten <=0pt,>=stealth] (e1) -- (c0.north);
            \node [empt node] (e2) [above=0.6cm of c1.east] {$g(x_2,\ldots,x_{d-1})$};

            \draw [-, mybrace=0.08, decorate, decoration={brace, mirror, amplitude=5pt,raise=0.1cm}] (p6.south west) -- (p10.south east);
            \node [empt node] (e3) [below=0.3cm of p6.south] {};
            \draw [->, thick, shorten >=0pt,shorten <=0pt,>=stealth] (e3) -- (c5.north);
            \node [empt node] (e4) [below=0.8cm of p7.east] {$g(x_1,\ldots,x_{d-2})$};

            \draw[-,very thick] (p1.north west) -- (p6.north east);
            \draw[-,very thick] (p1.south west) -- (p6.south east);
            \draw[-,very thick] (p1.north west) -- (p1.south west);
            \draw[-,very thick] (p6.north east) -- (p6.south east);

            \draw[-,very thick] (c0.north west) -- (c5.north east);
            \draw[-,very thick] (c0.south west) -- (c5.south east);
            \draw[-,very thick] (c0.north west) -- (c0.south west);
            \draw[-,very thick] (c5.north east) -- (c5.south east);
                
        \end{tikzpicture}
\caption{Depiction of the map $T$ as a PBCA of size $d-1$, with local rule corresponding to the generating function $g: \F_2^{d-2} \to \F_2$.}
\label{fig:pbca}
\end{figure}
\end{remark}

We can now prove our main result: the Latin square generated by a NBCA defined by a bipermutive local rule $f$ has a transversal on the main diagonal if and only if the underlying generating function induces an invertible PBCA.

\begin{theorem}
\label{thm:main}
Let $F: \F_2^{d-1} \to \F_2^{d-1}$ be a NBCA equipped with a bipermutive local rule $f: \F_2^{d} \to \F_2$ of diameter $d$, where $f$ is defined by a generating function $g: \F_2^{d-2} \to \F_2$ for all $x = (x_1, x_2, \ldots, x_{d-1}, x_d)$ as 
\begin{equation}
\label{eq:bip}
f(x_1,x_2, \ldots, x_{d-1}, x_d) = x_1 \oplus g(x_2, \ldots, x_{d-1}) \oplus x_d \enspace .
\end{equation}
Then, the Latin square $S_F$ of order $N=2^{d-1}$ generated by $F$ has a transversal on the main diagonal if and only if the generating function $g$, when considered as a local rule of diameter $d-2$, induces an invertible PBCA $T: \F_2^{d-1} \to \F_2^{d-1}$.

\begin{proof}
Suppose that $S_F$ has a transversal on its main diagonal. Then, it follows that the NBCA $F$ evaluated on all symmetric $2(d-1)$-bit blocks $x||x$ is a permutation of $F_2^{d-1}$. By Remark~\ref{rem:alg-ex}, the $d-1$ output coordinates of this permutation $T(x) \equiv F(x||x)$ only depends on the generating function $g$, and by Remark~\ref{rem:pbca}, the map $T$ corresponds to the PBCA equipped with rule $g$ evaluated on the vector $(x_2, \ldots, x_{d-1}, x_1)$ of size $d-1$. Since the PBCA $T$ is shift invariant, without loss of generality one can also consider $T$ as a PBCA defined on the vector $(x_1, x_2, \ldots, x_{d-1})$, which is simply a cyclic rotation of $(x_2, \ldots, x_{d-1}, x_1)$.

Conversely, suppose that the generating function $g: \F_2^{d-2} \to \F_2$ induces a PBCA $T: \F_2^{d-1} \to \F_2^{d-1}$ that is invertible, namely a shift-invariant bijective mapping $(x_1, x_2, \ldots, x_{d-1}) \mapsto (y_1, y_2, \ldots, y_{d-1})$. Consider now the the NBCA $F: \F_2^{2(d-1)} \to \F_2^{d-1}$ defined by the bipermutive local rule $f$ of Eq.~\ref{eq:bip}, with $S_F$ being the Latin square of order $2^{d-1}$ generated by $F$. By Remarks~\ref{rem:alg-ex} and~\ref{rem:pbca}, the diagonal of $S_F$ corresponds to $F(x||x) = T(X)$ for all $x \in \F_2^{d-1}$, just applied on a cyclically shifted vector $(x_2, \ldots, x_{d-1}, x_1)$. Since $F$ is also shift-invariant, then one can consider $T$ as defined on the cyclic input vector $(x_1, \ldots, x_{d-1})$. \qed
\end{proof}
\end{theorem}

\section{Computational Search Experiments}
\label{sec:comp}
On account of Theorem~\ref{thm:main}, checking if a BCA generates a Latin square with a transversal on the main diagonal can be reduced to verifying that the PBCA $T: \F_2^{d-1} \to \F_2^{d-1}$ induced by the generating function $g: \F_d^{d-2} \to \F_2^{d-1}$ of the bipermutive rule is invertible. Notice that invertibility does not necessarily mean that the PBCA $T$ must also be \emph{reversible}: what we are looking for are the so-called \emph{globally invertible rules} i.e., rules inducing invertible PBCA only for certain array sizes. A famous example of globally invertible rule is the transformation $\chi: \F_2^3 \to \F_2$ defined as $f(x_1, x_2, x_3) = x_1 \oplus x_2 \oplus x_2x_3$, which give an invertible CA whenever the array size is odd~\cite{daemen94}. However, $\chi$ is not a good example for our construction: taken as a generating function of $d-2=3$ variables, it does not yield an invertible PBCA of length $d-1=4$ cells, since the length of the array is even. Other examples of these \emph{complementing landscapes rules}\footnote{The name comes from the fact that the update rule can be expressed as flipping the state of the cell if and only if the neighborhood is in a particular landscape, which is a specific form of regular expression.} can be found in the appendix of Daemen's PhD thesis~\cite{daemen95}, as well as in~\cite{mariot19,haugland24}.

To make our investigation more systematic, we implemented an exhaustive search algorithm that, given in input the diameter $d$ of the bipermutive local rule, search among all the $2^{2^{d-2}}$ generating functions of $d-2$ variables those that define invertible PBCA of length $d-1$. In particular, we performed our exhaustive search experiments up to $d=6$, which means looking at $65\, 536$ generating functions in the largerst considered instance. Interestingly, up to diameter $d=5$ our exhaustive search \emph{only found linear rules} that generated invertible PBCA of length $d-1$, or equivalently Latin squares with a transversal on the main diagonal. On the other hand, for $d=6$ our search yielded $472$ rules in total, out of which $456$ were nonlinear. As an example, Figure~\ref{fig:ex-diag} in the Appendix reports the $32\times 32$ Latin square generated by the BCA $F: \F_2^{10} \to \F_2^5$ whose bipermutive rule is defined by the nonlinear generating function $g(x_1,x_2,x_3,x_4)=x_1 \oplus x_3 \oplus x_1x_4$.

\section{Conclusions}
\label{sec:outro}
Clearly, the road to a complete characterization of all transversals in BCA-based Latin squares is still quite long: so far, we managed to find a class of rules that induce a transversal on the main diagonal. Then, the next natural step of this line of investigation is to extend the result to other transversals, with the goal of constructing other $N-1$ disjoint ones beside the main diagonal. In turn, this would yield a construction for nonlinear BCA that generate Latin squares with an orthogonal mate. A interesting direction to tackle this problem is to define other transversals in terms of the main diagonal. For example, one could think of applying a linear transformation (such as an XOR shift) on the coordinates of the diagonal, and then check if the new set of coordinates yields a permutation. We plan to investigate this idea in future research.

\bibliographystyle{splncs04}
\bibliography{bibliography}

\section*{Appendix: Example of nonlinear BCA-based Latin square}

\begin{figure}[!]
\centering

\caption{The $32\times32$ Latin square generated by the BCA of diameter~6 with generating function $g(x_1,x_2,x_3,x_4)=x_1\oplus x_3\oplus x_1x_4$. Background shading follows the viridis colormap (symbol~1: dark purple, symbol~32: yellow). Cells outlined in black form the main diagonal transversal.}
\label{fig:ex-diag}
\end{figure}

\end{document}